\newtheorem{theorem}{Theorem}[section]
\newtheorem{corollary}[theorem]{Corollary}
\newtheorem{lemma}[theorem]{Lemma}
\newtheorem{assumption}{Assumption}
\title{Diffusion Methods for Generating Transition Paths \thanks{This work is supported in part by National Science Foundation via award NSF DMS-2012286 and DMS-2309378.}} 
\author{Luke Triplett, Jianfeng Lu}
\affil{Duke University}
\date{September 18, 2023}
\begin{document}
\maketitle

\begin{abstract}
In this work, we seek to simulate rare transitions between metastable states using score-based generative models. An efficient method for generating high-quality transition paths is valuable for the study of molecular systems since data is often difficult to obtain. We develop two novel methods for path generation in this paper: a chain-based approach and a midpoint-based approach. The first biases the original dynamics to facilitate transitions, while the second mirrors splitting techniques and breaks down the original transition into smaller transitions. Numerical results of generated transition paths for the M\"uller potential and for Alanine dipeptide demonstrate the effectiveness of these approaches in both the data-rich and data-scarce regimes.
\end{abstract}

\section{Introduction}

A challenge arises in the study of molecular dynamics when the behavior of a system is characterized by rare transitions between metastable states. Practically, these rare transitions mean that Monte Carlo simulations take a prohibitively long time even if enhanced sampling techniques are used. 

A common way to understand the transition process is to sample transition paths from one metastable state to the other and use this data to estimate macroscopic properties of interest. Sampling through direct simulation is inefficient due to the high energy barrier, which has led to the exploration of alternative methods. Some notable methods include transition path sampling \cite{bolhuis2002transition}, biased sampling approaches \cite{pinski2010transition}, and milestoning \cite{faradjian2004computing}. Across different applications of rare event simulation, importance sampling and splitting methods are notable. The former biases the dynamics to reduce the variance of the sampling \cite{fleming1977exit}, while the latter splits rare transitions into a series of higher probability steps \cite{cerou2019adaptive}.

Broadly, generative models such as Generative Adversarial Models (GANs) and Variational Autoencoders (VAEs) have been developed to learn the underlying distribution of a dataset and can generate new samples that resemble the training data, with impressive performance. Among other tasks, generative models have been successfully used for image, text, and audio generation. Recently, VAEs have been used for transition path generation \cite{lelievre2023generative}. For this approach, the network learns a map from the transition path space to a smaller latent space (encoder) and the inverse map from the latent space back to the original space (decoder). The latent space is easier to sample from and can be used to generate low-cost samples by feeding latent space samples through the decoder.

Another promising method from machine learning frequently used in image-based applications is diffusion models or score-based generative models \cite{song2020score} \cite{croitoru2023diffusion}. This paper will provide methods to generate transition paths under overdamped Langevin dynamics using score-based generative modeling. As with VAEs, diffusion models rely on a pair of forward and backward processes. The forward process of a diffusion model maps the initial data points to an easy-to-sample distribution through a series of noise-adding steps. The challenge is to recover the reverse process from the noisy distribution to the original distribution, which can be achieved by using score matching \cite{hyvarinen2005estimation} \cite{vincent2011connection}. Having estimated the reverse process, we can generate samples from the noisy distribution at a low cost and transform them into samples from the target distribution.

The naive application of score-based generative modeling is not effective because of the high dimensionality of discretized paths. We introduce two methods to lower the dimensionality of the problem, motivated by techniques that have been used previously for simulating transition paths. The chain method, which is introduced in Section \ref{chain_gen_sec}, updates the entire path together and relies on a decomposition of the score in which each path point only depends on adjacent path points. The midpoint method, which we outline in Section \ref{midpoint_generation_sec}, generates path points separately across multiple iterations. It also uses a decomposition which gives the probability of the midpoint of the path conditioned on its two endpoints.

In Section \ref{reverse SDE section}, we give an overview of diffusion models and the reverse SDE sampling algorithm for general data distributions. In Section \ref{transition_paths_overview}, we give background on transition paths and transition path theory, before discussing the two methods for generating transition paths in Section \ref{generating transition paths}. Numerical results and a more detailed description of our algorithm are included in Section \ref{results}. The main contributions of this paper are to establish that diffusion-based methods are effective for generating transition paths and to propose a new construction for decomposing the probability of a transition path for dimension reduction.

\section{Reverse SDE Diffusion Model} \label{reverse SDE section}

There are three broad categories commonly used for diffusion models. Denoising Diffusion Probabilistic Models (DDPM) \cite{ho2020denoising} noise the data via a discrete-time Markov process with the transition kernel given by $P(x_t|x_{t-1}) = \mathcal{N}(x_t; \sqrt{1 - \beta_t} x_{t-1}, \beta_t I)$, where the hyperparameters $\{\beta_t\}$ determine the rate of noising. The reverse kernel of the process is then approximated with a neural network. The likelihood $P(x_t|x_{t-1})$ can't be calculated, so a variational lower bound for the negative log-likelihood is used. NCSM \cite{song2019generative} uses a different forward process, which adds mean-0 Gaussian noise. NCSM uses annealed Langevin sampling for sample generation, where the potential function is an approximation of the time-dependent score function $\nabla \log(p_t(x))$. The third approach models the forward and reverse processes as an SDE \cite{song2020score}. The model used in this paper can be described in either the DDPM or the reverse SDE framework, but we will describe it in the context of reverse SDEs.

Let us first review the definitions of the forward and backward processes, in addition to describing the algorithms for score-matching and sampling from the reverse SDE. Reverse SDE diffusion models are a generalization of NCSM and DDPM. As for all generative modeling, we seek to draw samples from the unknown distribution $p_{data}$, which we have samples from. In the case of transition paths, the data is in the form of a time series. Since diffusion models are commonly used in computer vision, this will often be image data. The forward process of a reverse SDE diffusion model maps samples from $p_{data}$ to a noisy distribution using a stochastic process $x_t$ such that
\begin{equation}
    dx_t = f(x_t, t) dt + g(t) dW_t, t \in [0, T],
    \label{stoc_process}
\end{equation}
where $W_t$ is standard Brownian motion. We denote the probability density of $x_t$ as $p_t(x)$. In this paper, we will use an Ornstein-Uhlenbeck forward process with $f(x_t, t) = -\beta x_t, g(t) = 1$. Then, $x_t \mid x_0 \sim N(x_0 e^{-\beta t}, \frac{1}{2 \beta} (1 - e^{-2 \beta t}) I)$ and $x_t \mid x_0 \overset{d}{\longrightarrow} N(0, \frac{1}{2 \beta}I)$ as $t \to \infty$. This means that as long as we choose a large enough $T$, we can start the reverse process at a standard normal distribution. The corresponding reverse process is described as follows:
\begin{equation}
\label{eqn:1.2}
    dx_t = (f(x_t, t) - g(t)^2 \nabla \log {p_t(x_t)}) d\tilde t + g(t) d\tilde W_t, t \in [0, T],
\end{equation}
where time starts at $T$ and flows backward to $0$, i.e., "$d\tilde t$" is a negative time differential. Reversing the direction of time we can get the equivalent expression
\begin{equation}
    dx_t = (-f(x_t, T-t) + \nabla \log {p_{T-t}(x_t)} g(x_t, T-t)^2) dt + g(x_t, T-t) dW_t.
\end{equation}
$\nabla \log {p_t(x)}$ is the score function of the noised distribution, which can't be retrieved analytically. So, the computational task shifts from approximating the posterior $p(x)$ directly, which is the target of energy-based generative models, to approximating the noised score $\nabla \log {p_t(x)}$. Posterior estimation of complex distributions is a well-studied and challenging problem in statistics. Modeling the score is often more tractable and does not require calculating the normalizing constant.

\subsection{Approximating the Score Function}
In reverse SDE denoising, we use a neural network to parameterize the time-dependent score function as $s_{\theta}(x, t)$. Since the reverse SDE involves a time-dependent score function, the loss function is obtained by taking a time average of the distance between the true score and $s_\theta$. Discretizing the SDE with time steps $0 = t_0 < t_1 < ... < t_N = T$, we get the following loss function:
\begin{equation}
L(\theta) = \frac{1}{N} \sum_{i = 1}^{N} h_{t_i} \mathbb{E}_{x_{t_i}}||\nabla \log p_{t_i} (x_{t_i}) - s_\theta(x_{t_i}, t_i)||_2^2.
\end{equation}

Using the step size $h_{t_i} = t_{i} - t_{i-1}$ is a natural choice for weighting, although it is possible to use different weights. The loss function can be expressed as an expectation over the joint probability of $x_{t_i}, x_0$ as \cite{vincent2011connection}
\begin{equation}
L(\theta) = \frac{1}{N} \sum_{i = 1}^{N} h_{t_i} \mathbb{E}_{x_0} \mathbb{E}_{x_{t_i} \mid x_0} ||\nabla \log p_{t_i} (x_{t_i} \mid x_0) - s_\theta(x_{t_i}, t_i)||_2^2.
\end{equation}
We can calculate $\nabla \log p_{t_i} (x_{t_i} \mid x_0)$ based on the forward process (in the case of Ornstein-Uhlenbeck, $p_{t_i} (\cdot \mid x_0)$ is a Gaussian centered at $x_0 e^{-\beta t}$). Remarkably, this depends only on $s_\theta$, $t$, and the choice of forward SDE. A similar loss function is used for NCSM with a different noise-adding procedure. The training procedure follows from the above expression for the loss.

\RestyleAlgo{ruled}
\SetKwComment{Comment}{/* }{ */}

\begin{algorithm}[H]
\caption{Learning the Score Function}\label{alg:two}
\KwData{$\{x_i\}_{i=1}^M, \{(t_i, h_i)\}_{i=1}^N, nb$ \Comment*[r]{nb = num batches, $h_i$ = step size}}
\KwResult{$s_\theta (x)$}
$i \gets 0$\;
$j \gets 0$\;
\For{$i < nb$}{
  $x_b \gets $ random.choice($x, M / nb$) \;
  \For{$j < N$} {
  $noise \gets randn\_like(x_b)$ \;
  $\sigma_j \gets \sqrt{\frac{1}{2 \beta} (1 - e^{-2 \beta})}$ \;
  $\tilde x_b \gets x_b e^{-\beta t} + \sigma_j \cdot noise$ \;
  $loss \gets loss + h_j \cdot ||s_\theta(\tilde x_b, \sigma_j) + \frac{x_b e^{-\beta t} - \tilde x_b}{\sigma_j^2}||_2^2$ \;
  $j \gets j + 1$\;
  }
  $loss$.backwards() \;
  $i \gets i + 1$\;
}
\end{algorithm}

\subsection{Sampling from Reverse SDE} \label{reverse_SDE_sampling}

Once we have learned the time-dependent score, we can use it to sample from the original distribution using the reverse SDE. Substituting our parameterized score function $s_\theta$ into \eqref{eqn:1.2}, we get an approximation of the reverse process,
\begin{equation}
d \tilde x_t = (-\beta x_t - s_\theta(x_{t}, t)) d\tilde t + d\tilde W_t, t \in [0, T].
\end{equation}

We want to evaluate $s_\theta$ at the times at which it was trained, so we discretize the SDE from equation \eqref{eqn:1.2} using the sequence of times $0 = \tilde t_0 \leq \tilde t_1 \leq ... \leq \tilde t_N = T$, where $\tilde t_k = T - t_{N-k}$. In forward time (flowing from 0 to T), this discretization gives
\begin{equation}
 d \tilde x_t = (\beta x_t + s_\theta(x_{\tilde t_k}, T - \tilde t_k)) dt + d W_t, t \in [\tilde t_k, \tilde t_{k+1}].
 \end{equation}
 We solve for $x_t$ using It\^{o}'s formula to retain the continuous dynamics of the first term. Let $\tilde z_t = e^{-\beta t} \tilde x_t$ and $\eta_k$ be a standard Gaussian random variable, then
\begin{equation}
d \tilde z_t = (-\beta e^{- \beta t} \tilde x_t + (\beta e^{-\beta t} \tilde x_t + e^{-\beta t} s_\theta(\tilde x_{\tilde t_k}, T - \tilde t_k)) dt + e^{-\beta t} dW_t.
\end{equation}
After solving for $\tilde z_{\tilde t_k}$, we can get an explicit expression for $\tilde x_{\tilde t_k}$, which is known as the exponential integrator scheme \cite{zhang2022fast}. 
\begin{equation}\label{eqn:2.3}
\tilde x_{\tilde t_{k+1}} = e^{\frac{1}{2} (\tilde t_{k+1} - \tilde t_k)} \tilde x_{\tilde t_k} + 2 (e^{\frac{1}{2} (\tilde t_{k+1} - \tilde t_k)} - 1) s_\theta(\tilde x_{\tilde t_k}, T - \tilde t_k) + \sqrt{e^{\tilde t_{k+1} - \tilde t_k} - 1} \cdot \eta_k,
\end{equation}
where $\eta_k \sim N(0, I_d)$. We denote the distribution of $x_{t_k}$ as $q_{t_k}$. According to the literature \cite{chen2023improved}, using an exponentially decaying step size for the discretization points provides strong theoretical guarantees for the distance between $q_{t_N}$ and the true data distribution, which we will explore further in Section \ref{conv-guarantees}. For the forward process, this means that $\frac{t_k - t_{k-1}}{t_{k+1} - t_k} = M$. It is important to choose $t_{min}, M$, and $N$ carefully, as they significantly affect the performance. More details about our implementation can be found in Section \ref{results}. Having estimated the score function, we can generate samples from a distribution that is close to $p_{data}$ by setting $\tilde x_0 \sim p_T \approx N(0, \frac{1}{2\beta} I)$, then repeatedly applying \eqref{eqn:2.3} at the discretization points. A visualization of the distribution of $\tilde x$ at different $\tilde t_k$ values under this procedure is shown in Figure \ref{time evolution}.

\begin{figure}[H]
  \centering
  \subfloat{\includegraphics[width=0.21\textwidth]{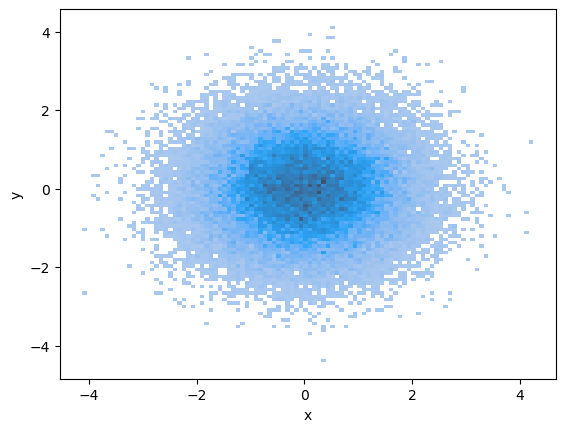}}%
  \qquad
  \subfloat{\includegraphics[width=0.21\textwidth]{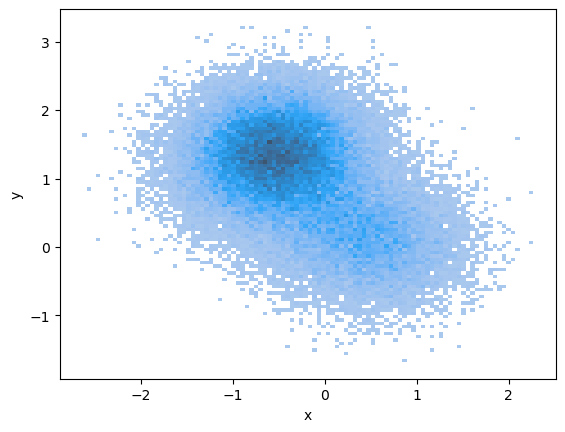}}%
  \qquad
  \subfloat{\includegraphics[width=0.21\textwidth]{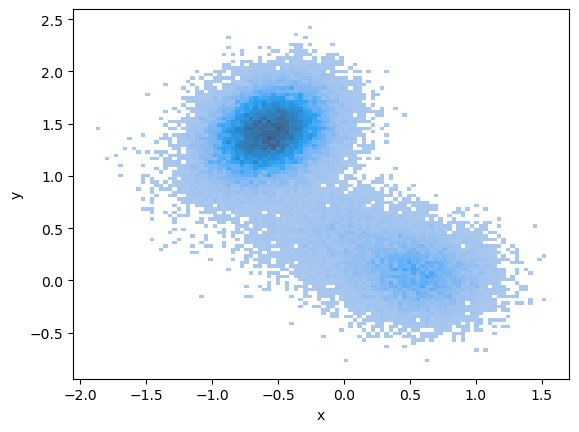}}%
  \qquad
  \subfloat{\includegraphics[width=0.21\textwidth]{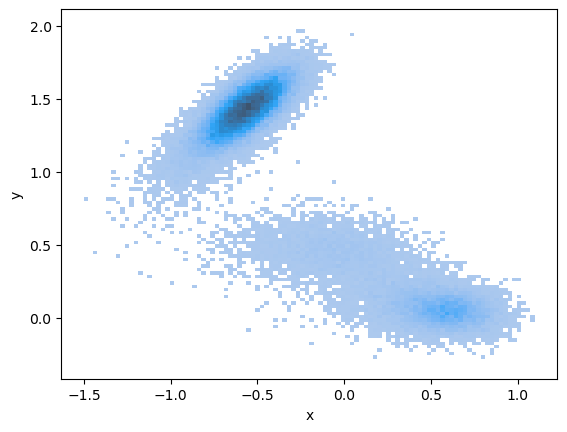}}%
  \caption{\textbf{Time evolution of the distribution of samples} using exponential integrator scheme from \eqref{eqn:2.3}. In the leftmost figure at $t = 0$, the samples are from a Gaussian distribution centered at (0,0). In the rightmost figure at $t = T$, the samples are approximately from the distribution $p(x) = e^{-V(x)}$, where $V$ is the M\"uller potential.} \label{time evolution}
\end{figure}

\section{Transition Paths} \label{transition_paths_overview}
\subsection{Overdamped Langevin Dynamics}
Transitions between metastable states are crucial to understanding behavior in many chemical systems. Metastable states exist in potential energy basins and so transitions will often occur on a longer time scale than the random fluctuations within the system. Due to their infrequency, it is challenging to effectively describe these transitions. In the study of transition paths, it is often useful to model the governing system as an SDE. In this section, we will look at the overdamped Langevin equation, given by
\begin{equation}
    dX_t = -\nabla V(X_t)dt + \sqrt{2B^{-1}}dW_t, \quad X_0 = x \in \mathbb{R}^d,
    \label{eq:1.1}
\end{equation}
where $V(x)$ is the potential of the system and $W_t$ is d-dimensional Brownian motion. Let $\mathcal{F}_t = \sigma({X_s, 0 \leq s \leq t})$ be the filtration generated by $X_t$. In chemical applications, $B^{-1}$ is the temperature times the Boltzmann constant. For reasonable $V$, the invariant probability distribution exists and is given by
\begin{equation}
    p(x) = e^{-\beta \nabla V(x)}/Z, \qquad  Z = \int_\Omega e^{-\beta \nabla V(x)} dx.
\end{equation}
We can see an example of samples generated from the invariant distribution when $V$ is the M\"uller potential in the above figure.

Consider two metastable states represented by closed regions $A, B \subset \mathbb{R}^n$. Their respective boundaries are $\partial A, \partial B$.
The paths that go from the boundary of $A$ to the boundary of $B$ without returning to $A$ are called transition paths \cite{vanden2010transition}. This means that a realization $\{X_s\}_{s=0}^T$ is a transition path if $X_0 \in \partial A, X_T \in \partial B$, and $X_s \notin A \cup B, \ \forall \ s \in (0, T)$. The distribution of transition paths is the target distribution of the generative procedure in our paper. A similar problem involves trajectories with a fixed time interval and is known as a bridge process. Much of the analysis in this paper can be extended to the fixed time case by removing the conditioning on path time during training and generation.

An important function for the study of transition paths is the committor function, which is defined by the following boundary value problem \cite{vanden2010transition}, \cite{lu2015reactive}
\begin{equation}
\begin{cases}
        L^P q(x) = 0 & \text{if } x \notin A \cup B\\
        q(x) = 0 & \text{if } x \in A\\
        q(x) = 1 & \text{if } x \in B
    \end{cases},
\label{committor}
\end{equation}
where $q(x)$ is the committor and $L^P$ is the generator for \eqref{eq:1.1}, defined as $L^P f = B^{-1} \Delta f - \nabla V \nabla f$.  

\subsection{Distribution of Transition Paths} \label{path_dist}
We will now examine the distribution of transition paths as in \cite{lu2015reactive}. Consider stopping times of the process $X_t$ with respect to $\mathcal{F}_t$:
\begin{align}
    \tau^{(X)}_A = \inf \{s \geq t : s \in A \}, \\
    \tau^{(X)}_B = \inf \{s \geq t : s \in B \}.
\end{align}
Let $E$ be the event that $\tau^{(X)}_B < \tau^{(X)}_A$. We are only looking at choices of $V(x)$ such that $P(\tau^{(X)}_A < \infty) = 1$ and $P(\tau^{(X)}_B < \infty) = 1$. It follows from It\^{o}'s formula that $q(x) = P(E \mid X_0 = x)$ is a solution to \eqref{committor}. Thus, the committor gives the probability that a path starting from a particular point reaches region $B$ before region $A$.

Consider the process $Z_t = X_{t \wedge \tau_A \wedge \tau_B}$, the corresponding measure $\mathbb{P}_x$, and the stopping times $\tau_A^{(Z)}, \tau_B^{(Z)}$. Since the paths we are generating terminate after reaching $B$, we will work with the $Z$ process, though it is possible to use the original $X$ process as well. We define $E^*$ as the event that $\tau_B^{(Z)} < \infty$. The function defined as $q(x) = P(E^* \mid Z_0 = x)$ is equivalent to the committor for the $X$ process. We are interested in paths drawn from the measure $\mathbb{Q}_x = \mathbb{P}_x(\cdot \mid E^*)$ on transition paths. The Radon-Nikodym derivative is given by
\begin{equation}
    \frac{d \mathbb{Q}_x}{d \mathbb{P}_x} = \frac{\mathbb{I}_E^*}{q(x)}.
\label{rn_derivative}
\end{equation}

Suppose that we have a path $\{ Z_s \}_{s=0}^\infty$ starting at $x$ and ending in $B$. Eq.~\eqref{rn_derivative} states that the relative likelihood of $\{ Z_s \}$ under $\mathbb{Q}_x$ compared to $\mathbb{P}_x$ increases as $q(x)$ decreases. This follows our intuition, since as $q(x)$ decreases, a higher proportion of paths starting from $x$ from the original measure will end in $A$ rather than $B$.

Using Doob's h-Transform \cite{day1992conditional}, we have that
\begin{align*}
    & P(Z_{t+s} = y \mid Z_t = x, E^*) = P(Z_{t+s} = y \mid Z_t = x)\frac{q(y)}{q(x)} \\
    \implies & L^Q f = \frac{1}{q} L^P (q f) \\
    \implies & L^Q f = L^P f + \frac{2 B^{-1} \nabla q}{q} \cdot \nabla f,
    \label{doob}
\end{align*}
and the law of $Y_t$ given by the SDE
\begin{equation}
    dY_t = (-\nabla V(Y_t) + 2B^{-1} \frac{\nabla q(Y_t)}{q(Y_t)})dt + \sqrt{2 B^{-1}} dw_t
    \label{transition_path_SDE}
\end{equation}
is equivalent to the law of transition paths. Thus, conditioning on transition paths is equivalent to adding a drift term to \eqref{eq:1.1}.

\section{Generating Transition Paths with Diffusion Models}\label{generating transition paths}

We can generate transition paths by taking sections of an Euler-Maruyama simulation such that the first point is in basin A, the last point is in basin B, and all other points are in neither basin. This discretizes the definition of transition paths discussed earlier. We will represent transition paths as $x = \{x_i\}_{i=1}^m, x_i \in \mathbb{R}^d$, where $m$ is the number of points in the path. We will denote the duration of a particular path as $T^*$. This should not be confused with $T$, which represents the duration of the diffusion process for sample generation. It is convenient to standardize the paths so that they contain the same number of points, and there is an equal time between all subsequent points in a single path.

\subsection{Chain Reverse SDE Denoising}
\label{chain_gen_sec}

In this section, we introduce chain reverse SDE denoising, which learns the gradient for each point in the path separately by conditioning on the previous point. Specifically, we will use $\frac{\partial}{\partial x_n} \log p_t(x(t))$ to represent the component of score function which corresponds to $x_n$ and $s^*(x(t), t)^{(n)}$ as the corresponding neural network approximation. Unlike in the next section, here we are approximating the joint probability of the entire path (that is, $s^*_\theta(x,t) \approx p_t(x(t))$) and split it into a product of conditional distributions involving neighboring points as described in Figure \ref{chain_gen_fig}. This is similar to the approach used in \cite{heng2021simulating}, but without fixing time. Our new loss function significantly reduces the dimension of the neural network optimization problem:
\begin{figure}[ht]
\centering
\includegraphics[width=0.8\textwidth]{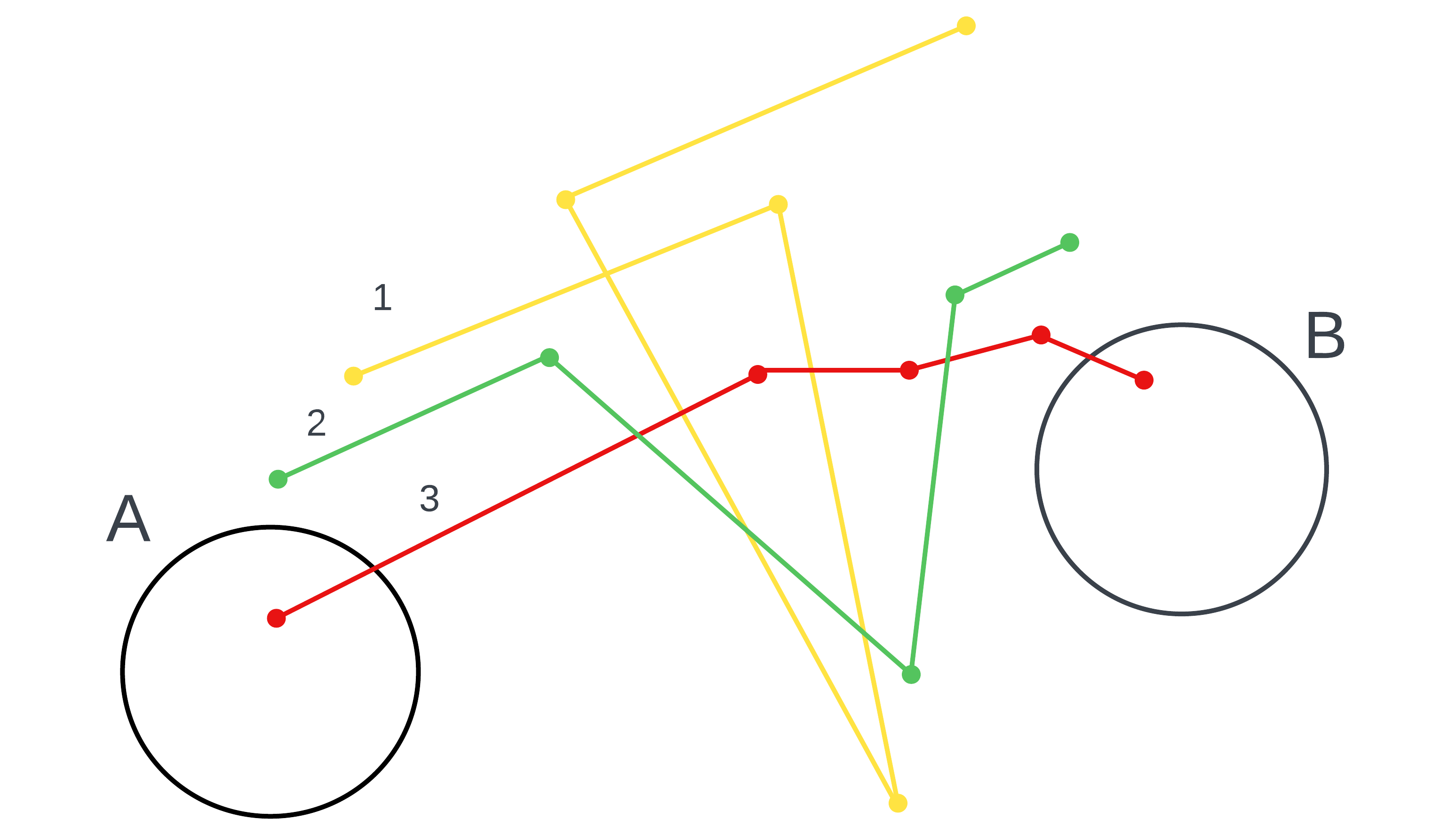}
\caption{\textbf{Chain procedure for generating paths.} Starting with a path that is pure noise (colored yellow), path points are updated together in each step and move towards a higher probability path (colored red). The movement of each point in a single step depends only on the position of its neighbors. }\label{chain_gen_fig}
\end{figure}
\begin{equation}
\begin{split}
L(\theta) = & \frac{1}{N} \sum_{i = 1}^{N} h_{t_i} \mathbb{E}_{x(0)} \mathbb{E}_{x(t_i) \mid x(0)} \Bigl\lVert \sum_{j = 1}^m \frac{\partial}{\partial x_j} \log p_{t_i} (x(t_i) \mid x(0), T^*) - s^*_\theta(x(t_i), t_i, T^*)^{(j)}\Bigr\rVert_2^2,
\end{split}
\end{equation}
where $x_j$ now denotes the $j$th point in the path. This is a slight change in notation from the previous section, where the subscript was the time in the generation process. The time flow of generation is now represented by $x(t)$. We can take advantage of the Markov property of transition paths to get a simplified expression for the sub-score for interior points,
\begin{equation}
\begin{split}
    & p(x) = p(x_1) \prod_{i} p(x_i \mid x_{i-1}) \\
    \implies & \log p(x) = \log p(x_i) + \sum_{i} \log p(x_i \mid x_{i-1}) \\
    \implies & \frac{\partial}{\partial x_n} p(x) = \frac{\partial}{\partial x_n} \log p(x_n \mid x_{n-1}) + \frac{\partial}{\partial x_n} \log p(x_{n+1} \mid x_n).
\end{split}
\end{equation}

It follows that we need two networks, $s_{\theta_1}(x_n(t), x_{n-1}(t), n, t, T^*)$ and $s_{\theta_2}(x_n(t), x_{n+1}(t), n, t, T^*)$. The first to approximate $\frac{\partial}{\partial x_n} \log p_t(x_n \mid x_{n-1}, T^*)$ and the second for $\frac{\partial}{\partial x_n} \log p_t(x_{n+1} \mid x_n, T^*)$. The first and last points of the path require slightly different treatment. From the same decomposition of the joint distribution, we have that
\begin{align}
\frac{\partial}{\partial x_1} p(x) & = \nabla \log p(x_1) + \frac{\partial}{\partial x_1} \log p(x_2 \mid x_1), \\
\frac{\partial}{\partial x_m} p(x) &= \frac{\partial}{\partial x_m} \log p(x_m \mid x_{m-1}).
\end{align}
Then, the entire description of the sub-score functions is given by
\begin{equation}
    s_\theta^*(x, t)^{(n)} = 
    \begin{cases}
        s_{\theta_3}(x_1, t, T^*) + s_{\theta_2}(x_1, x_2, 1, t, T^*) & \text{if } n = 1 \\
        s_{\theta_1}(x_m, x_{m-1}, m, t, T^*) & \text{if } n = m\\
        s_{\theta_1}(x_n, x_{n-1}, n, t, T^*) + s_{\theta_2}(x_n, x_{n+1}, n, t, T^*) & \text{otherwise}
    \end{cases},
\end{equation}
where $s_{\theta_3}(x_1, t, T^*)$ is a third network to learn the distribution of initial points. As outlined in Section \ref{reverse_SDE_sampling}, we use a discretized version of the reverse SDE to generate transition paths.

\subsection{Midpoint Reverse SDE Denoising}
\label{midpoint_generation_sec}

While the previous approach can generate paths well, there are drawbacks. It requires learning $O(2m)$ score functions since the score for a single point is affected by its index along the path. There is a large degree of correlation for parts of the path that are close together, but it is still not an easy computational task. Additionally, all points are updated simultaneously during generation, so the samples will start in a lower data density region. This motivates using a midpoint approach, which is outlined in Figure \ref{midpoint_generation}.

\begin{figure}[H]
\centering
\includegraphics[width=0.8\textwidth]{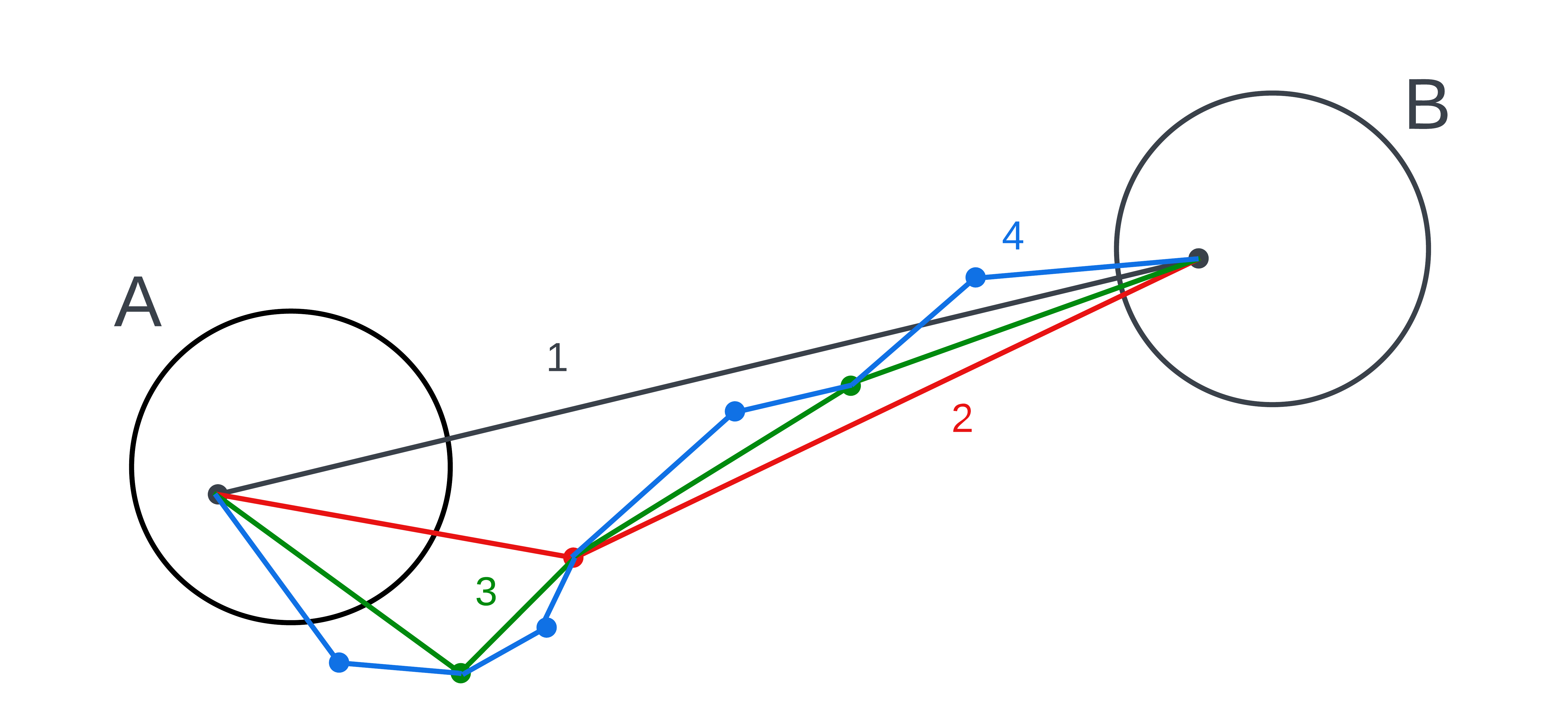}
\caption{\textbf{Midpoint procedure for generating paths.} The two endpoints are generated independently at the start of the procedure. Then, the path's interior is constructed incrementally via splitting. In each iteration, a new point is generated between every adjacent pair of existing points, making the path more refined. }\label{midpoint_generation}
\end{figure}

We train a score model that depends on both endpoints to learn the distribution of the point equidistant in time to each. This corresponds to the following decomposition of the joint density. Let us say that there are $2^k + 1$ points in each path. We start with the case in which the path is represented by a discrete Markov process $x = \{x_j\}_{j=1}^{2^k+1}$.

\begin{lemma}
Define $ f(i) = \max \{ k \in \mathbb{N} : (i - 1) \mod 2^k = 0 \} $, then
\[ p(x) = p(x_1) p(x_{2^k + 1} \mid x_1) \prod_{i=2}^{2^k} p(x_i \mid x_{i - f(i)}, x_{i + f(i)}).\]
\label{midpoint_lemma}
\end{lemma}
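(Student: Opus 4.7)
The plan is to induct on $k$, mirroring the dyadic splitting shown in Figure \ref{midpoint_generation}. The base case $k = 0$ has only two points $x_1, x_2$; the product over $i$ is empty, and the claim reduces to the Markov factorization $p(x_1, x_2) = p(x_1) p(x_2 \mid x_1)$.

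For the inductive step, I would exploit the fact that sub-sampling a Markov chain at every other index again yields a Markov chain. Writing $y_j = x_{2j-1}$ for $j = 1, \ldots, 2^{k-1}+1$, the inductive hypothesis applied to $(y_j)$ produces the midpoint decomposition of $p(x_1, x_3, \ldots, x_{2^k+1})$. Since $(2j-1) - 1 = 2(j-1)$, the $2$-adic valuation governing $f$ shifts by one under the re-indexing, so each factor $p(y_j \mid y_{j - f(j)}, y_{j + f(j)})$ in the sub-chain corresponds under $i = 2j - 1$ to the factor $p(x_i \mid x_{i - f(i)}, x_{i + f(i)})$ at odd $i$ in the full chain, with the step size doubled, which is exactly right.

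For the even-indexed points $x_2, x_4, \ldots, x_{2^k}$, the Markov property implies that, conditioned on the odd-indexed points, the even ones are mutually independent and each $x_{2m}$ depends only on its immediate neighbors $x_{2m-1}, x_{2m+1}$; since $f(2m) = 0$, these are exactly the missing factors at even $i$ in the target decomposition. Multiplying the odd-index decomposition (from the inductive hypothesis) by the even-index conditional (from the Markov property) then yields the claim for the full chain.

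The main obstacle will be the index bookkeeping under the re-indexing $y_j = x_{2j-1}$: one has to verify that the definition of $f$ behaves correctly under doubling, and that every interior index $i \in \{2, \ldots, 2^k\}$ enters exactly one of the two cases (odd or even) without overlap. A tidy alternative that avoids induction altogether is to verify the identity by direct cancellation: for $a < b < c$ in a Markov chain, $p(x_b \mid x_a, x_c) = p(x_b \mid x_a) p(x_c \mid x_b) / p(x_c \mid x_a)$ by the three-point Bayes identity; substituting this for each factor on the right-hand side of the lemma, the numerators should collect into $\prod_i p(x_i \mid x_{i-1})$ while the denominators telescope against $p(x_{2^k+1} \mid x_1)$, leaving precisely the Markov chain factorization of $p(x_1, \ldots, x_{2^k+1})$.
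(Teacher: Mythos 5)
Your proof is correct, but it runs the induction in a different direction from the paper's. The paper argues top-down: it splits the interior of the chain at the global midpoint $x_{2^{k-1}+1}$, uses the Markov property to make the two halves conditionally independent given that midpoint, and invokes the inductive hypothesis on each half (a chain of $2^{k-1}+1$ points at full resolution). You argue bottom-up: you pass to the coarse sub-chain $y_j = x_{2j-1}$, which is again Markov, apply the inductive hypothesis at half the resolution to obtain all the odd-indexed factors (your re-indexing check $i-1 = 2(j-1)$, so the dyadic step doubles under $i = 2j-1$, is exactly the bookkeeping required), and then supply all the even-indexed factors at once from the conditional independence of the fine-level points given the coarse chain. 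Both reductions rest only on the Markov property and both are valid; the paper's version mirrors the recursive generation procedure of Figure \ref{midpoint_generation} more literally, while yours separates scales cleanly and avoids any manipulation of joint conditionals over blocks. Your non-inductive alternative is also sound and arguably the most transparent of the three: substituting the three-point identity $p(x_b \mid x_a, x_c) = p(x_b \mid x_a)\,p(x_c \mid x_b)/p(x_c \mid x_a)$ into every factor, each denominator cancels against a numerator at the parent node of the dyadic tree (with $p(x_{2^k+1} \mid x_1)$ absorbing the root), leaving $p(x_1)\prod_i p(x_{i+1} \mid x_i)$, and this argument transfers verbatim to Corollary \ref{midpoint_corollary_cont}. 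One small point of consistency: like the paper, you should read $f(i)$ as the largest power of two dividing $i-1$ (so $f(i) = 1$, not $0$, for even $i$); the literal definition in the statement returns the exponent, under which $x_{i \pm f(i)}$ would not be the immediate neighbors that you correctly use in the even-index step.
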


\begin{proof}
First, we want to show that $p(x_2, ..., x_{2^k} \mid x_1, x_{2^k+1}) = \prod_{i=2}^{2^k} p(x_i \mid x_{i - f(i)}, x_{i + f(i)})$. We proceed by induction. The inductive hypothesis is that  $p(x_2, ..., x_{2^{k-1}} \mid x_1, x_{2^{k-1}+1}) = \prod_{i=2}^{2^{k-1}} p(x_i \mid x_{i - f(i)}, x_{i + f(i)})$.
\begin{equation*}
\begin{split}
p(x_2, ..., x_{2^k} \mid x_1, x_{2^k+1}) & = p(x_2, ..., x_{2^{k-1} + 1} \mid x_1, x_{2^k+1})p(x_{2^{k-1} + 1}, ..., x_{2^k} \mid x_1, x_2, ..., x_{2^{k-1} + 1}, x_{2^k+1})  \\
& = p(x_2, ..., x_{2^{k-1} + 1} \mid x_1, x_{2^k+1})p(x_{2^{k-1} + 1}, ..., x_{2^k} \mid x_{2^{k-1} + 1}, x_{2^k+1}) \\
& = \prod_{i=2}^{2^k} p(x_i \mid x_{i - f(i)}, x_{i + f(i)}). \qedhere
\end{split}
\end{equation*}
\end{proof}

Based on this, we can parameterize the score function by
\begin{equation}
s_\theta(x_{n^*}, x_{n_1}, x_{n_2}, \frac{n_d}{m} T^*, t) \approx \nabla \log p_t(x_{n^*} \mid x_{n_1}, x_{n_2}),
\end{equation}
where $n_s = \frac{n_1 + n_2}{2}, n_d = n_2 - n_1$. It is worth mentioning that without further knowledge of the transition path process, we would need to include $n_1$ as a parameter for $s_\theta$. However, we can see from \eqref{transition_path_SDE} that transition paths are a time-homogeneous process, which allows this simplification. Then,
\begin{align}
    p(x_{n_s} \mid x_{n_1}, x_{n_2}) & = \frac{p^*(x_{n_s} \mid x_{n_1}) p^*(x_{n_2} \mid x_{n_s})}{p^*(x_{n_2} \mid x_{n_1})} \\
    & = \frac{f(x_{n_s}, x_{n_1}, \frac{n_d}{2}) f(x_{n_s}, x_{n_1}, \frac{n_d}{2})}{f(x_{n_2}, x_{n_1}, n_d)} \\
    & = g(x_{n_1}, x_{n_s}, x_{n_2}, n_d),
\end{align}
where $p^*$ is the transition kernel for transition paths and $f, g$ are general functions designed to show the parameter dependence of $p(x_{n^*} \mid x_{n_1}, x_{n_2})$. The training stage is similar to that described in the previous section. For an interior point $x_i$, the corresponding score matching term is $s_\theta(x_{i}, x_{i-f(i)}, x_{i + f(i)}, \frac{2i_d}{m} T^*, t)$. For the endpoints, we have $s_{\theta_2}(x_{0} \mid T^*)$ and $s_{\theta_3}(x_{2_k} \mid x_0, T^*)$. It is algorithmically easier to use $2^k + 1$ discretization points for the paths, but we can generalize to $k$ points. Starting with the same approach as before, we will eventually end up with a term that can not be split by a midpoint if we want the correct number of points. Specifically, this occurs when we want to generate $n$ interior points for some even $n$. In this case, we can use
\begin{multline}
    p(x_{m_1+1}, x_{m_1+2}, ... x_{m_2-1} \mid x_{m_1}, x_{m_2}) \\= p(x_{m^*}) p(x_{m_1+1}, ..., x_{m^* - 1} \mid x_{m_1}, x_{m^*}) p(x_{m^* + 1}, ..., x_{m_2 - 1} \mid x_{m^*}, x_{m_2}),
\end{multline}
where $m^* = \frac{m_1 + m_2 + 1}{2}$ or $\frac{m_1 + m_2 - 1}{2}$. The score function will require an additional time parameter now that the midpoint is not exactly in between the endpoints. A natural choice is to use 
\begin{equation*}
s_\theta(x_{i}, x_{i-i^*}, x_{i + i^*}, \frac{2i^*}{m} T^*, t_{\text{shift}}, t), 
\end{equation*}
where
\begin{equation}
    t_{\text{shift}} = 
    \begin{cases}
        0 & \text{if midpoint is centered} \\
        \frac{T^*}{2m} & \text{otherwise}
    \end{cases}.
\end{equation}

A similar adaptation can be made when the points along the path are not evenly spaced. It remains to learn $P(T^*)$, which is a simple, one-dimensional problem. We can then use times drawn from the learned distribution as the seed for generating paths. 

We can extend the result from Lemma \eqref{midpoint_lemma} to the continuous case. Consider a stochastic process $x(t)$ as in \eqref{stoc_process} and arbitrary times $t_1, ..., t_n$. 

\begin{corollary}
Define $ f(i) = \max \{ k \in \mathbb{N} : (i - 1) \mod 2^k = 0 \} $
\[ p(x(t_1), ..., x(t_n)) = p(x(t_1)) p(x(t_n) \mid x(t_1)) \prod_{i=2}^{2^k} p(x(t_i) \mid x(t_{i + f(i)}), x(t_{i + f(i)})).\]
\label{midpoint_corollary_cont}
\end{corollary}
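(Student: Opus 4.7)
The plan is to reduce Corollary \ref{midpoint_corollary_cont} directly to Lemma \ref{midpoint_lemma} by exploiting the Markov property of the SDE \eqref{stoc_process}. The solution $x(t)$ of the forward SDE is a Markov process, so for any ordered collection of times $t_1 < t_2 < \cdots < t_n$ with $n = 2^k+1$, the vector $(x(t_1), \ldots, x(t_n))$ is a discrete-time Markov chain.

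First I would set $X_i := x(t_i)$ and observe from the Markov property of the diffusion, applied at the deterministic times $t_i$, that
\[ p(X_{i+1} \mid X_1, \ldots, X_i) = p(X_{i+1} \mid X_i), \qquad 1 \le i \le n-1.\]
This is the only probabilistic content used in the discrete lemma. Once this is established, the chain $\{X_i\}_{i=1}^{2^k+1}$ satisfies the hypotheses of Lemma \ref{midpoint_lemma}, so I would apply that lemma verbatim to obtain
\[ p(X_1, \ldots, X_{2^k+1}) = p(X_1)\, p(X_{2^k+1}\mid X_1)\prod_{i=2}^{2^k} p(X_i \mid X_{i-f(i)}, X_{i+f(i)}),\]
and then substitute back $X_i = x(t_i)$ to recover the claimed factorization.

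The main conceptual point to check, rather than a genuine obstacle, is that the indices $i - f(i)$ and $i + f(i)$ in the factorization refer to positions in the sampled chain, not to any dyadic structure of the actual times $t_i$. Since Lemma \ref{midpoint_lemma} is purely combinatorial on the index set of a Markov chain and does not require the underlying times to be evenly spaced, the factorization carries over to arbitrary ordered times without any additional regularity or continuity assumption on the path $x(t)$. No new estimates are required beyond the Markov property already guaranteed by \eqref{stoc_process}.
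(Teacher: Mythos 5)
Your proposal is correct and is essentially the paper's own argument: the paper simply notes that the proof follows by the same induction as in Lemma \ref{midpoint_lemma} since the Markov property still holds, which is exactly your observation that the diffusion sampled at ordered times $t_1 < \cdots < t_n$ forms a discrete Markov chain to which the lemma applies. Packaging this as a direct reduction to the lemma rather than repeating the induction is only a cosmetic difference.
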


The proof follows by the same induction as in the previous case, since the Markov property still holds. This technique can be extended to problems similar to transition path generation, such as other types of conditional trajectories. It is also possible to apply the approach of non-simultaneous generation to the chain method that we described in the previous section.

\subsection{Convergence Guarantees for Score-Matching} \label{conv-guarantees}
We seek to obtain a convergence guarantee for generating paths from the reverse SDE using the approach outlined in Section \ref{reverse_SDE_sampling}. There has been previous work on convergence guarantees for general data distributions by \cite{lee2023convergence}, \cite{chen2023improved}, \cite{de2022convergence}, \cite{chen2022sampling}. In particular, we can make guarantees for the KL divergence or TV distance between $p_{data}$ and the distribution of generated samples given an $L_2$ error bound on the score estimation. Recent results \cite{benton2023linear} show that a bound with linear $d$-dependence is possible for the number of discretization steps required to achieve $\mathrm{KL}(p_{data} || q_{t_N}) = \tilde O(\epsilon_0^2)$.

\begin{assumption}
The error in the score estimate at the selected discretization points $t_1, t_2, ..., t_k$ is bounded:
\begin{equation}
\frac{1}{T} \sum_{k=1}^N h_k \mathbb{E}_{p_{t_k}}||s_\theta(x, t_k) - \nabla \log p_{t_k}||^2 \leq \epsilon_0^2.
\end{equation}
\end{assumption}

\begin{assumption}
The data distribution has a bounded second moment
\begin{equation}
    \mathbb{E}_{p_{data}}||x||^2 < \infty.
\end{equation}
\end{assumption}

With these assumptions on the accuracy of the score network and the second moment of $p_{data}$, we can establish bounds for the KL-divergence between $p_{data}$ and $q_{t_N}$. Using the exponential integrator scheme from Section \ref{reverse_SDE_sampling}, the following theorem from \cite{benton2023linear} holds:

\begin{theorem}
Suppose that Assumptions 1 and 2 hold. If we choose $T = \frac{1}{2} \log \frac{d}{\epsilon_0^2}$ and $N = \Theta(\frac{d(T + \log (\frac{1}{\delta}))^2}{\epsilon_0^2})$, then there exists a choice of $M$ from \ref{reverse_SDE_sampling} such that $\mathrm{KL}(p_{data} || q_{t_N}) = \tilde O(\epsilon_0^2)$.
\end{theorem}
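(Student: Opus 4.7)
The plan is to bound $\mathrm{KL}(p_{data}\|q_{t_N})$ by splitting the total error into three pieces: the initialization error from starting the reverse process at $N(0,\tfrac{1}{2\beta}I)$ rather than $p_T$, the discretization error of the exponential integrator on the geometric grid $\{\tilde t_k\}$, and the score-matching error already controlled by Assumption~1. First I would apply the data-processing inequality to lift the endpoint KL to a path-space KL between the law of the exact continuous reverse SDE and the law of the discretized, approximately-scored reverse SDE on $[0,T]$. The initialization piece is then handled on its own using the exponential convergence of OU toward its invariant Gaussian: under Assumption~2,
\begin{equation*}
\mathrm{KL}\bigl(p_T\,\|\,N(0,\tfrac{1}{2\beta}I)\bigr)\;\lesssim\;e^{-2\beta T}\,\mathbb{E}_{p_{data}}\|x\|^2,
\end{equation*}
which under the choice $T=\tfrac12\log(d/\epsilon_0^2)$ contributes $\tilde O(\epsilon_0^2)$.

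Next I would apply Girsanov's theorem interval-by-interval on $[\tilde t_k,\tilde t_{k+1}]$ to express the path-space KL as
\begin{equation*}
\sum_{k=0}^{N-1}\int_{\tilde t_k}^{\tilde t_{k+1}}\mathbb{E}\bigl\|s_\theta(\tilde X_{\tilde t_k},T-\tilde t_k)\;-\;\nabla\log p_{T-s}(X_s)\bigr\|^2\,ds,
\end{equation*}
and split each integrand by the triangle inequality into a score-estimation term and a drift-freezing term. The weighted sum of the score-estimation pieces is exactly $T\epsilon_0^2$ by Assumption~1. The drift-freezing term requires bounding $\mathbb{E}\|\nabla\log p_{T-s}(X_s)-\nabla\log p_{T-\tilde t_k}(X_{\tilde t_k})\|^2$; here I would exploit the explicit conditional Gaussian form $\nabla\log p_t(x\mid x_0)=-(x-e^{-\beta t}x_0)/(\tfrac{1}{2\beta}(1-e^{-2\beta t}))$, together with moment bounds propagated along the reverse chain, to show that each step contributes an amount proportional to the log-ratio $\log(t_{k+1}/t_k)$ rather than to the raw step size.

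With the geometric schedule $t_{k+1}/t_k=M$, each log-ratio equals the constant $\log M$, so the total discretization error scales as $N\log M$ rather than blowing up with the score singularity at $t\to 0$; balancing this against the $\epsilon_0^2$ score-estimation term and the $\delta$-level truncation at $t_{\min}$ produces the claimed scaling $N=\Theta(d(T+\log(1/\delta))^2/\epsilon_0^2)$. The main obstacle I anticipate is controlling $\mathbb{E}\|\tilde X_{\tilde t_k}\|^2$ along the \emph{approximate} reverse dynamics without picking up an extra factor of $d$: this requires showing that the discretized exponential-integrator trajectory inherits enough of the contractive structure of the exact OU reverse process to keep the per-step freezing error dimension-free, and it is precisely this refined moment control, carried out in \cite{benton2023linear}, that upgrades earlier quadratic-in-$d$ bounds to the linear-in-$d$ rate stated here.
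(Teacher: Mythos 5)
Your outline reconstructs the right family of arguments, but you should know that the paper itself contains no proof of this statement: the theorem is imported verbatim from \cite{benton2023linear}, so the only fair comparison is against that reference's argument. Measured against it, your skeleton (data-processing to a path-space KL, Girsanov on each interval $[\tilde t_k,\tilde t_{k+1}]$, a three-way split into initialization, score-estimation, and discretization errors, with Assumption~1 contributing $T\epsilon_0^2$ and the OU convergence handling initialization) is the correct frame and matches the Chen et al.\ / Benton et al.\ line of analysis. Two small corrections within that frame: the initialization term is $\mathrm{KL}(p_T\,\|\,N(0,\tfrac{1}{2\beta}I))\lesssim (d+\mathbb{E}_{p_{data}}\|x\|^2)e^{-2\beta T}$, not $e^{-2\beta T}\mathbb{E}\|x\|^2$ alone --- the dimension factor comes from the covariance mismatch and is exactly why $T=\tfrac12\log(d/\epsilon_0^2)$ is the right choice; and the schedule in the cited result is uniform down to time of order one and only then geometric down to $\delta$, which is where the $(T+\log(1/\delta))^2$ factor originates.

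The genuine gap is that you defer, and also mislocate, the step that actually makes the theorem true with linear $d$-dependence. After the Girsanov change of measure, all expectations in the discretization term are taken under the law of the \emph{true} reverse process (the time-reversal of the forward OU started from $p_{data}$), not under the algorithm's exponential-integrator trajectory; so the ``refined moment control along the approximate dynamics'' you flag as the main obstacle is not what is needed, and propagating moments of $\tilde X_{\tilde t_k}$ plays no role. The crux is instead bounding $\mathbb{E}\bigl\|\nabla\log p_{T-s}(X_s)-\nabla\log p_{T-\tilde t_k}(X_{\tilde t_k})\bigr\|^2$ \emph{without any Lipschitz or smoothness assumption on the score} (none is available under Assumptions 1--2), and the conditional-Gaussian formula $\nabla\log p_t(x\mid x_0)$ you invoke does not directly give this, since the relevant score is of the marginal $p_t$, not of the transition kernel. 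The contribution of \cite{benton2023linear} is precisely a martingale/It\^o (stochastic-localization) identity for the score along the reverse OU flow showing that this increment is controlled, dimension-freely per coordinate, by $d$ times the squared ratio of the step size to the remaining time, which sums over the uniform-plus-geometric grid to $O(d(T+\log(1/\delta))^2/N)$ and yields the stated $N$. As written, your proposal acknowledges this step only by citing the same reference the paper cites, so it is a plausible road map rather than a proof; to make it self-contained you would need to supply that score-increment bound (or an equivalent substitute), and to restate the anticipated difficulty accordingly.
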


It is important to keep in mind that the KL-divergence between $p_{data}$ and $q_{t_N}$ does not entirely reflect the quality of the generated samples. In a practical setting, there also must be sufficient differences between the initial samples and the output. Otherwise, the model is performing the trivial task of generating data that is the same or very similar to the input data.

\section{Results} \label{results}

\subsection{M\"uller Potential}

To exhibit the effectiveness of our algorithms, we look at the overdamped Langevin equation from \eqref{eq:1.1} and choose $V$ as the two-welled M\"uller potential in $\mathbb{R}^2$ defined by
\begin{equation}
\label{eqn:5.1}
    V(x) = \sum_{i=1}^{4} D_i \text{ exp}(a_i(X_i - x_1)^2 + b_i(X_i - x_1)(Y_i - x_2) + c_i(Y_i - x_2)^2).
\end{equation}
We used the following parameters, as used in \cite{khoo2019solving} \cite{li2019computing}
\begin{alignat*}{2}
    \tag{5.2}
    \label{5.2}
    a = [-1, -1, -6.5, 0.7], \\
    b = [0, 0, 11, 0.6], \\
    c = [-10, -10, -6.5, 0.7], \\
    d = [-200, -100, -170, 15].
\end{alignat*}
We define regions A and B as circles with a radius of 0.1 centered around the minima at (0.62, 0.03) and (-0.56, 1.44) respectively. This creates a landscape with two major wells at A and B, a smaller minimum between them, and a potential that quickly goes to infinity outside the region $\Omega := [-1.5, 1.5] \times [-0.5, 2]$. We used $B^{-1} = 10 \sqrt{2}$ for this experiment, which is considered a moderate temperature.

For each of the score functions learned, we used a network architecture consisting of 6 fully connected layers with 20, 400, 400, 200, 20, and 2 neurons, excluding the input layer. The input size of the first layer was 7 (2 data points, $n, T^*$ and $t$) for the chain method and 8 for the midpoint method (3 data points, $T^*$ and $t$). We used a hyperbolic tangent (tanh) for the first layer and leaky ReLU for the rest of the layers. The weights were initialized using the Xavier uniform initialization method. All models were trained for 200 epochs with an initial learning rate of 0.015 and a batch size of 600. For longer paths, we used a smaller batch size and learning rate because of memory limitations. The training set was about 20,000 paths, so this example is in the data-rich regime. All noise levels were trained for every batch. We used the \eqref{eqn:2.3} to discretize the reverse process. We found that $t_{min} = 0.005$ and $T = 7$ with 100 discretization points gave strong empirical results. For numerical reasons, we used a maximum step size of 1. In particular, this restriction prevents the initial step from being disproportionately large, which can lead to overshooting.

\begin{figure}[H]
\centering
\includegraphics[width=0.8\textwidth]{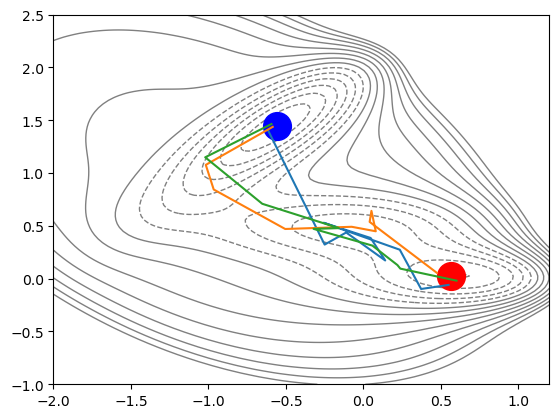}
\caption{\textbf{Sample paths generated using midpoint method.} The paths display consistent qualitative behavior characterized by random fluctuations that remain within low energy levels. The region for minimum A is shown with a red circle and the region for minimum B is shown with a blue circle.}\label{mueller_paths_fig}
\end{figure}

The paths generated using either method (sample shown in Figure \ref{mueller_paths_fig}) closely resemble the training paths. To evaluate the generated samples numerically, we can calculate the relative entropy of the generated paths compared to the training paths or the out-of-sample paths (Tables \ref{mid_entropy}, \ref{chain_entropy}). We converted collections of paths into distributions using the scipy gaussian\_kde function.

\begin{figure}[H]
    \centering
    \subfloat[Relative entropy comparison of NN generated transition paths with training and out of sample paths using midpoint method]{%
    \begin{tabular}[t]{|c|c|c|c|}
        \hline
        \# Discretization Points & Output and Training & Output and OOS & Training and OOS \\
        \hline
        9 & 0.018 & 0.068 & 0.031 \\
        \hline
        17 & .020 & 0.078 & 0.029 \\
        \hline
        33 & 0.032 & 0.109 & 0.035 \\
        \hline
    \end{tabular}
    \label{mid_entropy}
}
\end{figure}

\begin{figure}[H]
    \centering
    \subfloat[Relative entropy comparison of NN generated transition paths with training and out of sample paths using chain method]{%
    \begin{tabular}[t]{|c|c|c|c|}
        \hline
        \# Discretization Points & Output and Training & Output and OOS & Training and OOS \\
        \hline
        9 & 0.012 & 0.056 & 0.031 \\
        \hline
        17 & 0.016 & 0.065 & 0.029 \\
        \hline
        33 & 0.025 & 0.035 & 0.081 \\
        \hline
    \end{tabular}
    \label{chain_entropy}
}
\end{figure}

We can also test the quality of our samples by looking at statistics of the generated paths. For example, we can look at the distribution of points on the path. From transition path theory \cite{lu2015reactive}, we know that $p^*(x) \propto p(x) q(x) (1 - q(x))$, where $p$ is the original probability density of the system, $p^*$ is the density of points along transition paths, and $q$ is the committor. This gives a 2-dimensional distribution as opposed to a $2m$ dimensional one since each of the path points is treated independently. Here, the three relevant distributions are the training, output, and ground truth distributions. 

\begin{figure}
  \centering
  \subfloat[Difference Plot]{\includegraphics[width=0.45\textwidth]{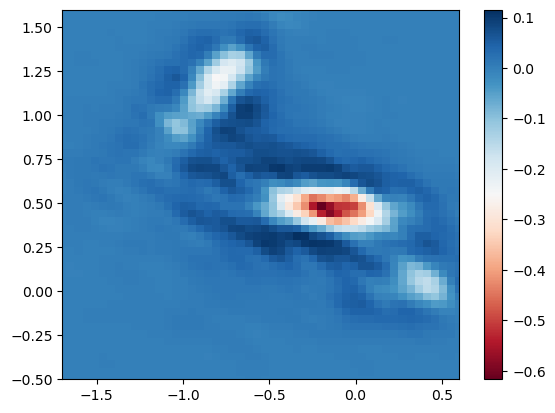}}%
  \qquad
  \subfloat[Training Density]{\includegraphics[width=0.45\textwidth]{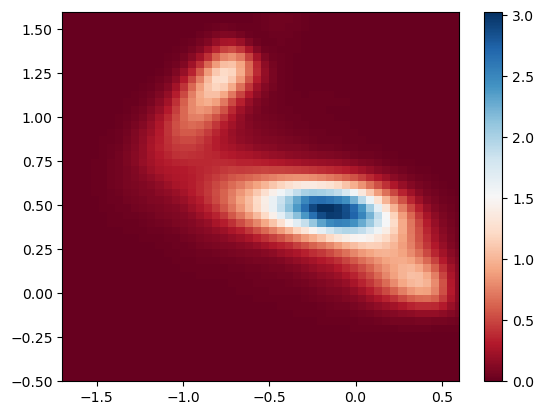}}%
  \caption{\textbf{Comparison of the density path points between training and output samples.} The density of path points for the training sample is shown on the right, while the difference between the densities is shown on the left. The maximum relative difference is around $20\%$.}
\end{figure}

We include the average absolute difference between the three relevant distributions at 2500 grid points in Tables \ref{dist_mid}, \ref{dist_chain}.

\begin{figure}[H]
\centering
    \subfloat[Average absolute difference comparison of the distribution of points on transition paths using midpoint method]{%
        \begin{tabular}[t]{|c|c|c|c|}
         \hline
         \# Discretization Points & Output and Training & Output and True & Training and True \\
         \hline
         9 & 0.027 & 0.055 & 0.034 \\
         \hline
         17 & 0.031 & 0.060 & 0.034 \\
         \hline
         33 & 0.039 & 0.072 & 0.037 \\
         \hline
        \end{tabular}
        \label{dist_mid}
}
\end{figure}

\begin{figure}[H]
\centering
    \subfloat[Average absolute difference comparison of the distribution of points on transition paths using chain method]{%
        \begin{tabular}[t]{|c|c|c|c|}
         \hline
         \# Discretization Points & Output and Training & True and Training & Output and True \\
         \hline
         9 & 0.022 & 0.051 & 0.034 \\
         \hline
         17 & 0.027 & 0.056 & 0.034 \\
         \hline
         33 & 0.034 & 0.062 & 0.037 \\
         \hline
        \end{tabular}
        \label{dist_chain}
}
\end{figure}

The chain method slightly outperforms the midpoint method according to this metric. Qualitatively, the paths generated using the two methods look similar. There is a strong correlation between the error and the training time, but a limited correlation with the size of the training dataset after reaching a certain size. This is promising for potential applications, as the data requirements are not immediately prohibitive.

\begin{figure}[H]
\centering
\includegraphics[width=0.7\textwidth]{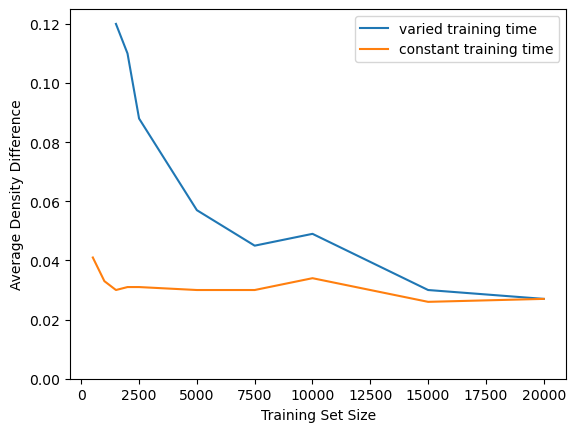}
\caption{\textbf{Comparison of average probability density difference between the 20,000 path dataset and the generated paths for different sizes of the training subset.} When the training scheme is kept constant, but the size of the training set is smaller, the model's accuracy drops drastically. On the other hand, when the number of epochs was adjusted accordingly, the accuracy did not vary significantly until a sample size of around 500.}\label{data_dependency_fig}
\end{figure}

The strong dependence on training time, which is shown in Figure \ref{data_dependency_fig}, supports the notion that the numerical results of this paper could be improved with longer training or a larger model. We further explore performance in the data-scarce case in the following section.

\subsection{Alanine Dipeptide}
For our second numerical experiment, we generate transitions between stable conformers of Alanine dipeptide. The transition pathways \cite{jang2006multiple} and free-energy profile \cite{vymetal2010metadynamics} of Alanine dipeptide have been studied in the molecular dynamics literature and are shown in Figure \ref{reactive_pathways_fig}. This system is commonly used to model dihedral angles in proteins. We use the dihedral angles, $\phi$ and $\psi$ (shifted for visualization purposes), as collective variables.

The two stable states that we consider are the lower density state around $\psi = 0.6 \text{ rad}, \phi = 0.9 \text{ rad}$ (minimum A), and the higher density state around $\psi = 2.7 \text{ rad}, \phi = -1.4 \text{ rad}$ (minimum B). The paths are less regular than with M\"uller potential because the reactive paths "jump" from values close to $\pi$ and $-\pi$. We use 22 reactive trajectories as training data for the score network. The data include 21 samples that follow reactive pathway 1 and one sample that follows reactive pathway 2. We use the same network architecture as used for the midpoint approach in the previous section. We train the model for 2000 epochs with an initial learning rate of 0.0015.

Because the data are angles, the problem moves from $\mathbb{R}$ to the surface of a torus. We wrap around points outside of the interval $[-\pi, \pi]$ during the generation process to adjust for the new topology. This approach was more effective than encoding periodicity into the neural network. For this problem, the drift term for the forward process, $-\frac{1}{2} x$, is no longer continuous, though this did not cause numerical issues. It may be useful to use the more well-behaved process $d\Omega_t = \sin (-\Omega_t) dt + \sqrt{2} dW_t$ applied to problems with this geometry.

\begin{figure}[H]
\centering
\includegraphics[width=0.8\textwidth]{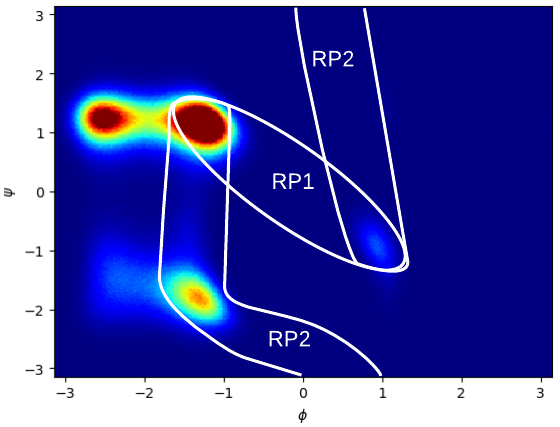}
\caption{\textbf{Alanine Dipeptide probability density and reactive pathways.} The trajectories are highly concentrated around the bottom minimum. Transitions through the first reactive pathway (RP1) occur much more frequently than transitions through the second reactive pathway (RP2).} \label{reactive_pathways_fig}
\end{figure}

\begin{figure}[ht]
  \centering
  \subfloat{\includegraphics[width=0.45\textwidth]{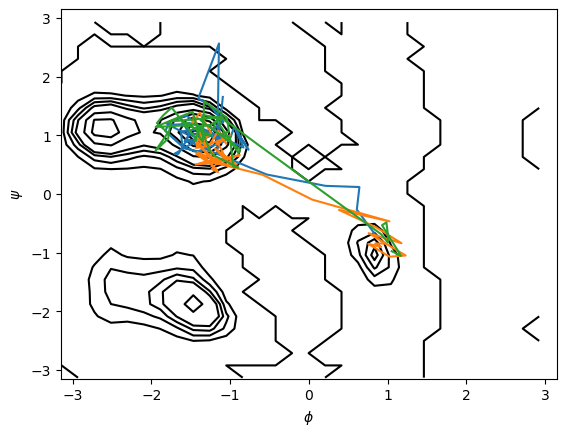}} \hfill
  \subfloat{\includegraphics[width=0.45\textwidth]{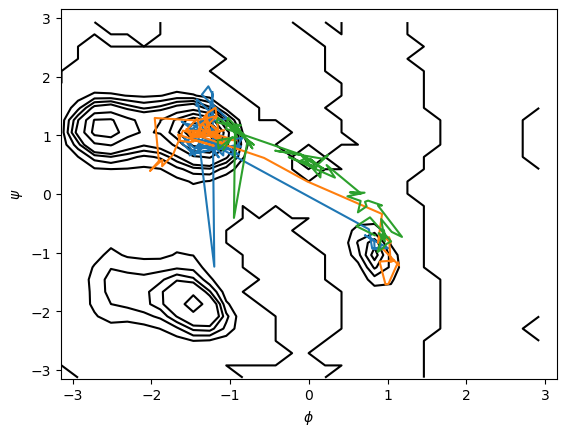}} \\
  \subfloat[Training Paths]{\includegraphics[width=0.45\textwidth]{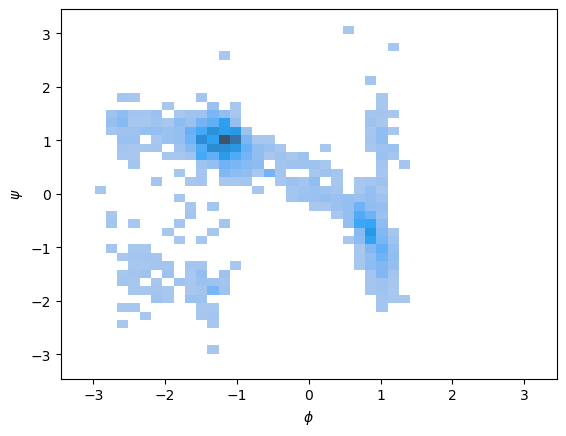}}
  \hfill
  \subfloat[Generated Paths]{\includegraphics[width=0.45\textwidth]{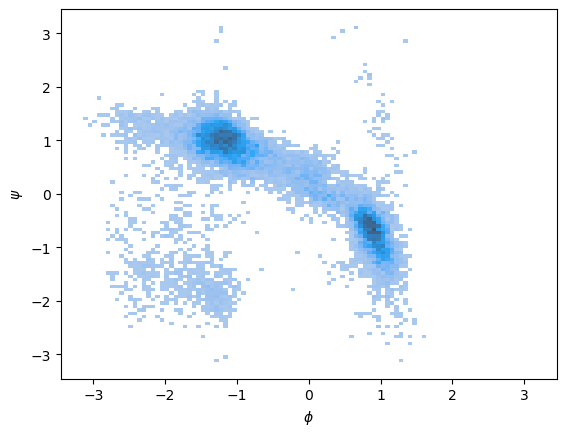}}
  \caption{\textbf{Comparison of original paths with generated paths using dihedral angles as reactive coordinates.} Pictured on the top are a few paths from the training set (left) and generated using the midpoint method (right). On the bottom, the densities of points on the paths are shown for both cases. }
  \label{ala_paths_fig}
\end{figure}

We can generate paths across both reactive channels. In some trials, paths are generated that follow neither channel, instead going straight from the bottom left basin to minimum A. This occurred with similar frequency to paths across RP2. It is more challenging to determine the quality of the generated paths when there is minimal initial data, but it is encouraging that the qualitative shape of the paths is similar (Figure \ref{ala_paths_fig}). Specifically, the trajectories tend to stay around the minima for a long time, and the transition time between them is short. For this experiment, the paths generated using the midpoint method were qualitatively closer to the original distribution. In particular, disproportionately large jumps between adjacent points occurred more frequently using the chain method, though both methods were able to generate representative paths. 

\FloatBarrier 
\bibliography{references}{}
\bibliographystyle{alpha}

\end{document}